\newtheorem{theorem}{Theorem}
\newtheorem{lemma}[theorem]{Lemma}
\newtheorem{note}{Note}
\newcommand{\beq}{\begin{equation}}
\newcommand{\eeq}{\end{equation}}
\newcommand{\beqn}{\begin{eqnarray}}
\newcommand{\beqqn}{\begin{eqnarray*}}
\newcommand{\eeqn}{\end{eqnarray}}
\newcommand{\eeqqn}{\end{eqnarray*}} 
\newcommand{\ba}{\begin{array}}
\newcommand{\ea}{\end{array}}
\newcommand{\sqbullet}{\scriptscriptstyle\blacksquare}
\title{The Capacity Loss of Dense Constellations}
\author{
\authorblockN{Tobias Koch}
\authorblockA{University of Cambridge\\
\tt tobi.koch@eng.cam.ac.uk}
\and
\authorblockN{Alfonso Martinez}
\authorblockA{
Universitat Pompeu Fabra\\
\tt alfonso.martinez@ieee.org}
\and
\authorblockN{Albert Guill{\'e}n i F{\`a}bregas}
\authorblockA{
ICREA \& Universitat Pompeu Fabra\\
University of Cambridge\\
\tt guillen@ieee.org}
}
\begin{document}

\maketitle

\begin{abstract}
We determine the loss in capacity incurred by using signal constellations with a bounded support over general complex-valued additive-noise channels for suitably high signal-to-noise ratio. Our expression for the capacity loss recovers the power loss of 1.53dB for square signal constellations.
\renewcommand{\thefootnote}{}
  \footnote{The research leading to these results has received funding
    from the European Community's Seventh Framework Programme
    (FP7/2007-2013) under grant agreement No. 252663 and from the European Research Council under ERC grant agreement 259663.}
\end{abstract}
\setcounter{footnote}{0}

\section{Introduction}

As it is well known, the channel capacity of the complex-valued Gaussian channel with input power at most $\const{P}$ and noise variance $\sigma^2$ is given by \cite{shannon48}
\beq
C_{\textnormal{G}}(\const{P},\sigma) = \log\left(1+\frac{\const{P}}{\sigma^2}\right).
\eeq
Although inputs distributed according to the Gaussian distribution attain the capacity, they suffer from several drawbacks which prevent them from being used in practical systems. Among them, especially relevant are the unbounded support and the infinite number of bits needed to represent signal points. 

In practice, discrete distributions with a bounded support are typically preferred---in this case, the number of points is allowed to grow with the signal-to-noise ratio (SNR). Ungerboeck computed the rates that are achievable over the Gaussian channel when the channel input takes value in a finite constellation \cite{ungerboeck82}. He observed that, when transmitting at a rate of $R$ bits per channel use, there is not much to be gained from using constellations with size $\const{N}$ larger than $2^{R+1}$. Ozarow and Wyner provided an analytic confirmation of Ungerboeck's observation by deriving a lower bound on the rates achievable with finite constellations \cite{ozarowwyner90}. In both works, the channel inputs are assumed to be uniformly distributed on a lattice within some enclosing boundary, where the size of the boundary is scaled in order to ensure unit input-power.

A related line of work considered signal constellations with favorable geometric properties, e.g., minimum Euclidean distance or minimum average error probability. For signal constellations with a large number of points, i.e., \emph{dense constellations}, Forney \emph{et al.} \cite{forneygallagerlanglongstaffqureshi84} estimated the loss in SNR with respect to the Gaussian input to be $10\log_{10} \frac{\pi e}{6} \approx 1.53$dB by comparing the volume of an $n$-dimensional hypercube with that of an $n$-dimensional hypersphere of identical average power. Later, Ungerboeck's work led to the study of multidimensional constellations based on lattices \cite{forneywei89}--\nocite{forney89,calderbankozarow90}\cite{kschischangpasupathy93}.

Recently, Wu and Verd\'u have studied the information rates that are achievable over the Gaussian channel when the input takes value in a finite constellation with $\const{N}$ signal points \cite{wuverdu10}. For every fixed SNR, they show that the difference between the capacity and the achievable rate tends to zero exponentially in $\const{N}$. For the optimal constellation, the peak-to-average-power ratio grows linearly with $\const{N}$, inducing no capacity loss. This is in contrast to the constellations considered by Ungerboeck \cite{ungerboeck82} and Ozarow and Wyner \cite{ozarowwyner90}, which have a finite peak-to-average-power ratio.

In this work, we adopt an information-theoretic perspective to study the capacity loss incurred by signal constellations with a bounded support over the Gaussian channel for sufficiently small noise variance. In particular, we use the duality-based upper bound to the mutual information in \cite{lapidothmoser03_3} to provide a lower bound on the capacity loss. The results are valid for both peak- and average-power constraints and generalize directly to other additive-noise channel models. For sufficiently high SNR, our results recover the power loss of $1.53$dB for square signal constellations without invoking geometrical arguments.

\section{Channel Model and Capacity}
\label{sec:channel}           

We consider a discrete-time, complex-valued additive noise channel, where the channel output $Y_k$ at time $k\in\Integers$ (where $\Integers$ denotes the set of integers) corresponding to the time-$k$ channel input $x_k$ is given by
\begin{equation}
\label{eq:channel}
Y_k = x_k + \sigma W_k, \quad k\in\Integers.
\end{equation}
We assume that $\{W_k,\,k\in\Integers\}$ is a sequence of independent and identically distributed, centered, unit-variance, complex random variables of finite differential entropy. We further assume that the distribution of $W_k$ does neither depend on $\sigma>0$ nor on the sequence of channel inputs $\{x_k,\,k\in\Integers\}$. 

The channel inputs take value in the set $\set{S}$, which is assumed to be a bounded Borel subset of the complex numbers $\Complex$. We further assume that $\set{S}$ has positive Lebesgue measure and that $0\in\set{S}$. 

The set $\set{S}$ can be viewed as the region that limits the signal points. For example, for a \emph{square signal constellation}, it is a square:
\begin{equation}
\label{eq:square}
\set{S}_{\sqbullet} \triangleq \{x\in\Complex\colon -\const{A}\leq\Re{x}\leq\const{A}, -\const{A}\leq\Im{x}\leq\const{A}\}
\end{equation}
for some $\const{A}>0$. Here $\Re{x}$ and $\Im{x}$ denote the real and imaginary part of $x$, respectively. Similarly, for a \emph{circular signal constellation},
\begin{equation}
\label{eq:circular}
\set{S}_{\bullet} \triangleq \{x\in\Complex\colon |x|\leq \const{R}\}, \quad \textnormal{for some $\const{R}>0.$}
\end{equation}

We study the capacity of the above channel under an average-power constraint $\const{P}$ on the inputs. Since the channel is memoryless, it follows that the capacity $C_{\set{S}}(\const{P},\sigma)$ (in nats per channel use) is given by
\begin{equation}
\label{eq:capacityS}
C_{\set{S}}(\const{P},\sigma) = \sup_{X\in\set{S}, \E{|X|^2}\leq\const{P}} I(X;Y)
\end{equation}
where the supremum is over all input distributions with essential support in $\set{S}$ that satisfy $\E{|X|^2}\leq\const{P}$.

We focus on $C_{\set{S}}(\const{P},\sigma)$ in the limit as the noise variance $\sigma$ tends to zero. In particular, we study the \emph{capacity loss}, which we define as
\begin{equation}
\label{eq:SH_def}
\const{L} \triangleq \lim_{\sigma\downarrow 0} \biggl\{C_{\Complex}(\const{P},\sigma)- C_{\set{S}}(\const{P},\sigma) \biggr\}.
\end{equation}
(Theorem~\ref{theorem} ahead asserts the existence of the limit.) Here $C_{\Complex}(\const{P},\sigma)$ denotes the capacity of the above channel when the support-constraint $\set{S}$ is relaxed, i.e.,
\begin{equation}
C_{\Complex}(\const{P},\sigma) = \sup_{\E{|X|^2}\leq\const{P}} I(X;Y).
\end{equation}
For small $\sigma$, we have \cite{shannon48}
\begin{equation}
C_{\Complex}(\const{P},\sigma) = \log\frac{\const{P}}{\sigma^2} + \log(\pi e) - h(W) + o(1)
\end{equation}
where the $o(1)$-term vanishes as $\sigma$ tends to zero. (Here $\log(\cdot)$ denotes the natural logarithm and $h(\cdot)$ denotes differential entropy.) The capacity loss \eqref{eq:SH_def} can thus be written as
\begin{IEEEeqnarray}{rCl}
\const{L}  &= & \log\const{P} + \log(\pi e) - h(W) \nonumber\\
& & {} -  \lim_{\sigma\downarrow 0} \biggl\{\sup_{X\in\set{S}, \E{|X|^2}\leq\const{P}} I(X;Y) - \log\frac{1}{\sigma^2} \biggr\} \label{eq:SH_expl}.
\end{IEEEeqnarray}

By choosing an input distribution that does not depend on $\sigma$, we can achieve\footnote{We define $h(X) = -\infty$ if the distribution of $X$ is not absolutely continuous with respect to the Lebesgue measure.}
\begin{equation}
\label{eq:SH_LB0.5}
\const{L} \leq \log\const{P} + \log(\pi e) - \sup_{X\in\set{S},\E{|X|^2}\leq\const{P}} h(X).
\end{equation}
Indeed, we have
\begin{IEEEeqnarray}{lCl}
I(X;Y) & = &  h(X+\sigma W) - h(W) + \log\frac{1}{\sigma^2} \label{eq:LB_1}
\end{IEEEeqnarray}
which follows from the behavior of differential entropy under deterministic translation and under scaling by a complex number. Extending \cite[Lemma 6.9]{lapidothmoser03_3} (see also \cite{linderzamir94}) to complex random variables yields then that, for every $\E{|X|^2}<\infty$ and $\E{|W|^2}<\infty$, the first differential entropy on the right-hand side (RHS) of \eqref{eq:LB_1} satisfies
\begin{equation}
\label{eq:h_cont}
\lim_{\sigma\downarrow 0} h(X+\sigma W) = h(X).
\end{equation}
Consequently, we obtain
\begin{IEEEeqnarray}{lCl}
\IEEEeqnarraymulticol{3}{l}{\lim_{\sigma\downarrow 0} \Biggl\{\sup_{X\in\set{S}, \E{|X|^2}\leq\const{P}} I(X;Y)-\log\frac{1}{\sigma^2}\Biggr\} }  \nonumber\\
\quad & \geq & \sup_{X\in\set{S}, \E{|X|^2}\leq\const{P}}\lim_{\sigma\downarrow 0} \Biggl\{ I(X;Y)-\log\frac{1}{\sigma^2}\Biggr\} \nonumber\\
& = & \sup_{X\in\set{S}, \E{|X|^2}\leq\const{P}} h(X) - h(W) \label{eq:SH_LB}
\end{IEEEeqnarray}
which together with \eqref{eq:SH_expl} yields \eqref{eq:SH_LB0.5}.

Let $\const{P}_{\set{U}}$ denote the average power of a random variable that is uniformly distributed over $\set{S}$, i.e.,
\begin{equation}
  \const{P}_{\set{U}} \triangleq \frac{\int_{\set{S}} |x|^2\d x}{\int_{\set{S}} \d x'}.
\end{equation}
A small modification of the proof in \cite[Th.~12.1.1]{coverthomas06} shows that the density that maximizes $h(X)$ for $X\in\set{S}$ with probability one and $\E{|X|^2}\leq\const{P}$ has the form
\begin{equation}\label{lemma2}
f_{\star}(x) = \frac{e^{-\lambda |x|^2}}{\int_{\set{S}} e^{-\lambda |x'|^2}\d x'}\I{x\in\set{S}}, \quad x\in\Complex
\end{equation}
where $\lambda=0$ for $\const{P}\geq\const{P}_{\set{U}}$, and where $\lambda$ satisfies 
\begin{equation}
\label{eq:lambda}
\frac{\int_{\set{S}}e^{-\lambda |x|^2}|x|^2 \d x}{\int_{\set{S}}e^{-\lambda |x'|^2}\d x'} = \const{P}
\end{equation}
for $\const{P}<\const{P}_{\set{U}}$. Here $\I{\textnormal{statement}}$ denotes the indicator function: it is equal to one if the statement in the brackets is true and it is otherwise equal to zero.

Applying \eqref{lemma2} to \eqref{eq:SH_LB0.5} yields
\begin{equation}
\label{eq:SH_LB_2}
\const{L} \leq \log\const{P} + \log(\pi e) - \log\biggl(\int_{\set{S}}e^{-\lambda |x'|^2}\d x'\biggr) - \lambda \,\const{P}.
\end{equation}
For $\const{P}=\const{P}_{\set{U}}$ (and hence $\lambda=0$), this becomes
\begin{equation}
\const{L} \leq \log(\pi e) + \log\biggl(\int_{\set{S}}|x|^2 \d x\biggr) - 2 \log\biggl(\int_{\set{S}} \d x\biggr). \label{eq:this}
\end{equation}
Specializing \eqref{eq:this} to a square signal constellation \eqref{eq:square} yields (irrespective of $\const{A}$)
\begin{equation}
\const{L}_{\sqbullet} \leq \log\frac{\pi e}{6}
\end{equation}
which corresponds to a power loss of roughly $1.53$dB. Hence, we recover the rule of thumb that ``square signal constellations have a $1.53$dB power loss at high signal-to-noise ratio."

For a circular signal constellation \eqref{eq:circular}, the upper bound \eqref{eq:this} becomes (irrespective of $\const{R}$)
\begin{equation}
\const{L}_{\bullet} \leq \log\frac{e}{2}
\end{equation}
recovering the power loss of $1.33$dB \cite{forneygallagerlanglongstaffqureshi84}.

The inequality in \eqref{eq:SH_LB_2} holds with equality if the capacity-achieving input-distribution does not depend on $\sigma$, cf.~\eqref{eq:SH_LB}. However, this is in general not the case. For example, for circularly-symmetric Gaussian noise and a circular signal constellation \eqref{eq:circular},
it was shown by Shamai and Bar-David \cite{shamaibardavid95} that, for every $\sigma>0$, the capacity-achieving input-distribution is discrete in magnitude, with the number of mass points growing with vanishing $\sigma$. Nevertheless, the following theorem demonstrates that the RHS of \eqref{eq:SH_LB_2} is indeed the capacity loss.

\begin{theorem}[Main Result]
\label{theorem}
For the above channel model, we have
\begin{equation}
\label{eq:SH}
\const{L} = \log\const{P} + \log(\pi e) - \log\biggl(\int_{\set{S}}e^{-\lambda |x'|^2}\d x'\biggr) - \lambda\, \const{P}
\end{equation}
where $\lambda=0$ for $\const{P}\geq\const{P}_{\set{U}}$, and where $\lambda$ satisfies \eqref{eq:lambda} for $\const{P}<\const{P}_{\set{U}}$.
\end{theorem}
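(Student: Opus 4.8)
\medskip
\noindent\emph{Proof idea.}
Since the achievability bound \eqref{eq:SH_LB_2} already gives the inequality ``$\leq$'' in \eqref{eq:SH}, the plan is to establish the matching converse, i.e.\ a lower bound on $\const{L}$. Writing $Z\triangleq\int_{\set{S}}e^{-\lambda|x'|^2}\ud x'$ and recalling \eqref{eq:SH_expl}, this amounts to proving
\begin{equation}
\limsup_{\sigma\downarrow0}\Biggl\{\sup_{X\in\set{S},\,\E{|X|^2}\leq\const{P}}I(X;Y)-\log\frac{1}{\sigma^2}\Biggr\}\leq\lambda\,\const{P}+\log Z-h(W).\label{eq:plan_goal}
\end{equation}
The reason \eqref{eq:plan_goal} does not follow from the interchange argument behind \eqref{eq:SH_LB} is that the capacity-achieving input depends on $\sigma$ and is in general discrete (as the Shamai--Bar-David result shows), so \eqref{eq:h_cont} may not be invoked for it; the converse therefore has to control \emph{every} admissible input simultaneously.

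The tool I would use for this is the duality upper bound of \cite{lapidothmoser03_3}: for any output density $r$ on $\Complex$ and any input $X$,
\begin{equation}
I(X;Y)\leq\E{D\bigl(P_{Y\mid X}(\cdot\mid X)\,\big\|\,R\bigr)},\label{eq:plan_dual}
\end{equation}
where the expectation is over $X$. Writing the transition law as $p_{Y\mid X}(y\mid x)=\sigma^{-2}p_W\!\bigl((y-x)/\sigma\bigr)$ and substituting $w=(y-x)/\sigma$, the divergence evaluates to
\begin{equation}
D\bigl(P_{Y\mid X}(\cdot\mid x)\,\big\|\,R\bigr)=\log\frac{1}{\sigma^2}-h(W)-\E{\log r(x+\sigma W)},\label{eq:plan_D}
\end{equation}
with the remaining expectation now over $W$. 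Thus \eqref{eq:plan_goal} would follow once I exhibit a density $r=r_\sigma$ with
\begin{equation}
\E{\log r(x+\sigma W)}\geq-\lambda|x|^2-\log Z-\epsilon(\sigma),\qquad x\in\set{S},\label{eq:plan_key}
\end{equation}
(the expectation again over $W$) and $\epsilon(\sigma)\to0$. Indeed, the right-hand side of \eqref{eq:plan_key} is exactly $\log f_\star(x)$; averaging over $X$ and using $\lambda\geq0$ together with $\E{|X|^2}\leq\const{P}$ converts the term $-\lambda|x|^2$ into $-\lambda\,\const{P}$, which is what \eqref{eq:plan_goal} requires.

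For the density in \eqref{eq:plan_key} I would take the convex combination
\begin{equation}
r(y)=(1-\gamma)\,\frac{e^{-\lambda|y|^2}}{Z_\delta}\,\I{y\in\set{S}_\delta}+\gamma\,q(y),\label{eq:plan_r}
\end{equation}
where $\set{S}_\delta$ is the $\delta$-neighbourhood of $\set{S}$, $Z_\delta\triangleq\int_{\set{S}_\delta}e^{-\lambda|y|^2}\ud y$, and $q$ is a fixed, everywhere-positive density with light tails (say circularly-symmetric Gaussian). The fattening $\set{S}_\delta$ accounts for signal points near $\partial\set{S}$ that the additive noise pushes just outside $\set{S}$, while the component $\gamma q$ keeps $r$ strictly positive on all of $\Complex$---this is what makes \eqref{eq:plan_dual} usable for discrete inputs, whose output law would otherwise fail to be absolutely continuous with respect to $R$. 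Lower-bounding $\log r$ on the event $\{x+\sigma W\in\set{S}_\delta\}$ by $\log\!\bigl((1-\gamma)e^{-\lambda|\cdot|^2}/Z_\delta\bigr)$ and off it by $\log(\gamma q(\cdot))$ reduces \eqref{eq:plan_key} to showing that five remainder terms vanish: $\log(1-\gamma)$, the normalisation gap $\log Z_\delta-\log Z$, the quadratic slack $\lambda\sigma^2\,\E{|W|^2}$, the escape penalty $\Pr[x+\sigma W\notin\set{S}_\delta]\,\log\gamma$, and the tail term $\E{\I{x+\sigma W\notin\set{S}_\delta}\,\log q(x+\sigma W)}$.

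The hard part is precisely to make these remainders vanish \emph{uniformly} in $x\in\set{S}$, which forces a careful coupling of the fattening $\delta$ and the tail mass $\gamma$ to the noise scale $\sigma$. Boundedness of $\set{S}$ keeps $|x|^2$ and $Z_\delta$ under control and sends $\lambda\sigma^2\E{|W|^2}\to0$; finiteness of $\E{|W|^2}$ lets Chebyshev bound the escape probability by $O(\sigma^2/\delta^2)$ uniformly in $x$. Choosing, for instance, $\delta=\sqrt{\sigma}$ and $\gamma=\sigma$ then makes $\delta\to0$ (hence $Z_\delta\to Z$) while $\Pr[\cdot]\,\log\gamma=O(\sigma\log\sigma)\to0$, and the uniform integrability of $|W|^2$ drives the tail term to zero as well. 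Verifying this balance---rather than the essentially mechanical manipulations \eqref{eq:plan_dual}--\eqref{eq:plan_r}---is where the real effort of the converse lies.
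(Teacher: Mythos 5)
Your proposal is correct, and at its core it is the same argument as the paper's: a duality converse via \eqref{eq:proof_dual} from \cite{lapidothmoser03_3}, with a reference output density equal to the truncated maximum-entropy density $e^{-\lambda|y|^2}$ on a fattened neighbourhood of $\set{S}$ plus an auxiliary component elsewhere, and Chebyshev controlling the escape probability exactly as in \eqref{eq:proof_ch}. The differences are in the execution and are worth recording. The paper's auxiliary component is the $\sigma$-dependent heavy-tailed density $\frac{1}{\pi^2\sigma|y|}\frac{1}{1+|y|^2/\sigma^2}$ on $\cset{S}_{\eps}$, which produces the three escape terms \eqref{eq:proof_l2}--\eqref{eq:proof_l4}, each handled by Jensen and Cauchy--Schwarz; your mixture $\gamma q$ with a \emph{fixed} Gaussian $q$ and weight $\gamma=\sigma$ instead produces $\Prob\bigl(Y\in\cset{S}_{\delta}\bigr)\log\gamma$ and (up to the constant $\log\pi$) the term $\E{|Y|^2\I{Y\in\cset{S}_{\delta}}}$, and boundedness of $\set{S}$ gives $\E{|Y|^2\I{Y\in\cset{S}_{\delta}}}\leq 2\sup_{x\in\set{S}}|x|^2\,\Prob\bigl(Y\in\cset{S}_{\delta}\bigr)+2\sigma^2\E{|W|^2}=O(\sigma)$ for $\delta=\sqrt{\sigma}$, so your remainders vanish uniformly over $x\in\set{S}$ with, if anything, less work. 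This is a genuinely interesting point: the paper's closing remark presents the slow tail decay as what ensures validity for \emph{every} unit-variance noise of finite differential entropy, but your analysis shows that a light-tailed mixture component already suffices for the asymptotic theorem for all such noises, because the vanishing mixture weight and the bounded support absorb the rare escape events; the choice of tail matters rather for the tightness of the \emph{nonasymptotic} bound in the paper's final section. Secondly, you couple the parameters ($\delta=\sqrt{\sigma}$, $\gamma=\sigma$) and take a single limit, whereas the paper keeps $\eps$ fixed, sends $\sigma\downarrow0$, and only then $\eps\downarrow0$ via Lemma~\ref{lemma3}; both orderings ultimately rest on the same fact $\int_{\set{S}_{\delta}}e^{-\lambda|y|^2}\ud y\to\int_{\set{S}}e^{-\lambda|y|^2}\ud y$ as $\delta\downarrow0$, i.e.\ on \eqref{eq:lemma3_2}, which tacitly assumes the boundary of $\set{S}$ is Lebesgue-null---a technicality your write-up shares with the paper's omitted proof of Lemma~\ref{lemma3}.
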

\begin{proof}
See Section~\ref{sec:theorem}.
\end{proof}

\begin{note}
It is not difficult to adapt the proof of Theorem~\ref{theorem} to other regions $\set{S}$ and moment constraints. For example, the same proof technique can be used to derive the capacity loss when $\set{S}$ is a Borel subset of the real numbers and the channel input's first-moment is limited, i.e., $\E{|X|}\leq\const{A}$.
\end{note}

Equations \eqref{eq:LB_1}--\eqref{eq:SH_LB} demonstrate that the capacity loss \eqref{eq:SH} can be achieved with a continuous-valued channel input having density $f_{\star}(\cdot)$.  Using the lower-semicontinuity of relative entropy \cite{posner75}, it can be further shown that \eqref{eq:SH} can also be achieved by any sequence of discrete channel inputs $\{X_{\const{N}}\}$ for which the number of mass points $\const{N}$ grows with vanishing $\sigma$, provided that
\begin{equation}
X_{\const{N}} \stackrel{\mathscr{\scriptscriptstyle L}}{\to} X_{\star} \quad \textnormal{as $\const{N}\to\infty$}
\end{equation}
where $X_{\star}$ is a continuous random variable having density $f_{\star}(\cdot)$. (Here $\stackrel{\mathscr{\scriptscriptstyle L}}{\to}$ denotes convergence in distribution.) Such a sequence can, for example, be obtained by approximating the distribution function corresponding to $f_{\star}(\cdot)$ by two-dimensional step functions.

\section{Proof of Theorem~\ref{theorem}}
\label{sec:theorem}
In view of \eqref{eq:SH_expl}, in order to prove Theorem~\ref{theorem} it suffices to show that
\begin{IEEEeqnarray}{lCl}
 \IEEEeqnarraymulticol{3}{l}{\lim_{\sigma\downarrow 0} \biggl\{\sup_{X\in\set{S}, \E{|X|^2}\leq\const{P}} I(X;Y) - \log\frac{1}{\sigma^2} \biggr\}} \nonumber\\
\qquad & \leq &  \log\biggl(\int_{\set{S}}e^{-\lambda |x'|^2}\d x'\biggr) + \lambda\, \const{P} - h(W).\label{eq:proof_main}
\end{IEEEeqnarray}
The claim follows then by combining \eqref{eq:proof_main} with \eqref{eq:SH_LB_2}. To this end, we use the upper bound on the mutual information \cite[Th.~5.1]{lapidothmoser03_3}
\begin{equation}
\label{eq:proof_dual}
I(X;Y) \leq \int D\bigl(W(\cdot|x) \bigm\| R(\cdot)\bigr) \d Q(x)
\end{equation}
where $Q(\cdot)$ denotes the input distribution; $W(\cdot|x)$ denotes the conditional distribution of the channel output, conditioned on $X=x$; and $R(\cdot)$ denotes some arbitrary distribution on the output alphabet. Every choice of $R(\cdot)$ yields an upper bound on $I(X;Y)$, and the inequality in \eqref{eq:proof_dual} holds with equality if $R(\cdot)$ is the actual distribution of $Y$ induced by $Q(\cdot)$ and $W(\cdot|\cdot)$.

To derive an upper bound on $I(X;Y)$, we apply \eqref{eq:proof_dual} with $R(\cdot)$ having density
\begin{equation}
\label{eq:proof_pdf}
r(y) = \left\{\begin{array}{ll} \displaystyle \frac{e^{-\lambda |y|^2}}{\const{K}_{\eps,\sigma}}, \quad & \displaystyle y\in\set{S}_{\eps} \\[10pt] \displaystyle \frac{1}{\const{K}_{\eps,\sigma}}\frac{1}{\pi^2 \sigma |y|}\frac{1}{1+|y|/\sigma^2}, \quad & \displaystyle y\notin \set{S}_{\eps}\end{array}\right.
\end{equation}
where
\begin{equation}
\label{eq:proof_Keps}
\const{K}_{\eps,\sigma} \triangleq \int_{\set{S}_{\eps}} e^{-\lambda |y|^2} \d y + \int_{\cset{S}_{\eps}} \frac{1}{\pi^2\sigma |y|} \frac{1}{1+|y|^2/\sigma^2} \d y
\end{equation}
is a normalizing constant; where $\set{S}_{\eps}$ denotes the $\eps$-neighborhood of $\set{S}$
\begin{equation}
\set{S}_{\eps} \triangleq \bigl\{y\in\Complex\colon |y-x'|\leq \eps, \textnormal{for some $x'\in\set{S}$}\bigr\};
\end{equation}
where $\cset{S}_{\eps}$ denotes the complement of $\set{S}_{\eps}$; and where $\lambda$ is zero for $\const{P}\geq\const{P}_{\set{U}}$ and satisfies \eqref{eq:lambda} for $\const{P}<\const{P}_{\set{U}}$. Some useful properties of $\const{K}_{\eps,\sigma}$ are summarized in the following lemma.

\begin{lemma} \label{lemma3} \mbox{}
The normalizing constant $\const{K}_{\eps,\sigma}$ satisfies
\begin{subequations}
\begin{IEEEeqnarray}{rCl}
\inf_{\substack{\eps>0,\\\sigma>0}} \const{K}_{\eps,\sigma} & > & 0 \label{eq:lemma3_3} \\
\lim_{\eps\downarrow 0} \lim_{\sigma\downarrow 0} \const{K}_{\eps,\sigma} & = & \int_{\set{S}} e^{-\lambda |y|^2} \d y. \label{eq:lemma3_2}
\end{IEEEeqnarray}
\end{subequations}
\end{lemma}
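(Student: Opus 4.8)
The plan is to treat the two displayed claims separately and, for the double limit, to evaluate the inner limit ($\sigma\downarrow 0$) before the outer one ($\eps\downarrow 0$).

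For \eqref{eq:lemma3_3} I would simply discard the (nonnegative) tail integral in \eqref{eq:proof_Keps} and use $\set{S}\subseteq\set{S}_{\eps}$ to obtain, for all $\eps,\sigma>0$,
\begin{equation}
\const{K}_{\eps,\sigma} \;\geq\; \int_{\set{S}_{\eps}} e^{-\lambda|y|^2}\,\d y \;\geq\; \int_{\set{S}} e^{-\lambda|y|^2}\,\d y.
\end{equation}
Since $\lambda\geq 0$ and $\set{S}$ is bounded, $e^{-\lambda|y|^2}$ is bounded below by a positive constant on $\set{S}$, so the rightmost integral is at least that constant times the positive Lebesgue measure $\int_{\set{S}}\d y>0$. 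As this lower bound depends on neither $\eps$ nor $\sigma$, the infimum is strictly positive.

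For \eqref{eq:lemma3_2}, the first integral in \eqref{eq:proof_Keps} is $\sigma$-free, so the inner limit reduces to showing that the tail integral vanishes as $\sigma\downarrow 0$ for each fixed $\eps$. Here I would first note that $0\in\set{S}$ forces $\cset{S}_{\eps}\subseteq\{y\colon|y|>\eps\}$, so the tail mass is at most the integral of the same nonnegative integrand over $\{|y|>\eps\}$. The crucial observation is that the factor $|y|^{-1}$ in \eqref{eq:proof_pdf} is tailored to cancel the polar Jacobian $\rho=|y|$; passing to polar coordinates and substituting $u=\rho/\sigma$ therefore gives
\begin{IEEEeqnarray}{rCl}
\int_{\cset{S}_{\eps}} \frac{1}{\pi^2\sigma|y|}\frac{1}{1+|y|^2/\sigma^2}\,\d y
& \leq & \frac{2}{\pi}\int_{\eps/\sigma}^{\infty}\frac{\d u}{1+u^2} \nonumber\\
& = & \frac{2}{\pi}\biggl(\frac{\pi}{2}-\arctan\frac{\eps}{\sigma}\biggr),
\end{IEEEeqnarray}
which tends to $0$ as $\sigma\downarrow 0$ with $\eps$ fixed. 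Hence $\lim_{\sigma\downarrow 0}\const{K}_{\eps,\sigma}=\int_{\set{S}_{\eps}}e^{-\lambda|y|^2}\,\d y$.

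It then remains to let $\eps\downarrow 0$. Since the sets $\set{S}_{\eps}$ decrease to $\overline{\set{S}}$ as $\eps\downarrow 0$, and $e^{-\lambda|y|^2}\I{y\in\set{S}_{1}}$ is an integrable dominating function for all $\eps\leq 1$, dominated convergence gives $\int_{\set{S}_{\eps}}e^{-\lambda|y|^2}\,\d y\to\int_{\overline{\set{S}}}e^{-\lambda|y|^2}\,\d y$. I expect the tail computation to be the conceptual core --- recognizing the Jacobian cancellation that turns the tail into a Cauchy-type integral concentrating on $|y|\gtrsim\sigma$ --- but the step that most needs care is this last one: the limiting domain is the closure $\overline{\set{S}}$ rather than $\set{S}$, and replacing it by $\set{S}$ requires that the boundary of $\set{S}$ be Lebesgue-null (which holds for the square and circular regions, and more generally for any Jordan-measurable constellation region). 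Granting this, $\int_{\overline{\set{S}}}e^{-\lambda|y|^2}\,\d y=\int_{\set{S}}e^{-\lambda|y|^2}\,\d y$, which establishes \eqref{eq:lemma3_2} and completes the proof.
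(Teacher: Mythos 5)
The paper offers no proof to compare against --- Lemma~\ref{lemma3} is stated with ``Proof: Omitted'' --- so your argument must be judged on its own, and it is correct and almost certainly the intended one. Your lower bound $\const{K}_{\eps,\sigma}\geq\int_{\set{S}}e^{-\lambda|y|^2}\,\d y>0$ works because $\lambda\geq 0$ is fixed by $\set{S}$ and $\const{P}$ alone (independent of $\eps$ and $\sigma$) and $\set{S}$ is bounded with positive measure. Your polar-coordinate evaluation of the tail, $\frac{2}{\pi}\bigl(\frac{\pi}{2}-\tan^{-1}(\eps/\sigma)\bigr)$, is exactly the quantity $1-\frac{2}{\pi}\tan^{-1}(\eps/\sigma)$ that the authors themselves invoke, also without proof, when they upper-bound $\const{K}_{\eps,\sigma}$ in the nonasymptotic section; so your Jacobian-cancellation computation is consistent with what the authors had in mind, and it additionally shows the tail integral equals $1$ when $\eps=0$, explaining the choice of the constant $\pi^2$ in \eqref{eq:proof_pdf}.

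Your closing caveat deserves more weight than you give it, because it is not a limitation of your method but of the lemma itself: under the paper's stated hypotheses ($\set{S}$ a bounded Borel set of positive measure with $0\in\set{S}$), claim \eqref{eq:lemma3_2} is false in general. Since $\set{S}_{\eps}\supseteq\overline{\set{S}}$ for every $\eps>0$ and $\bigcap_{\eps>0}\set{S}_{\eps}=\overline{\set{S}}$, the double limit always equals $\int_{\overline{\set{S}}}e^{-\lambda|y|^2}\,\d y$, and this strictly exceeds $\int_{\set{S}}e^{-\lambda|y|^2}\,\d y$ whenever $\overline{\set{S}}\setminus\set{S}$ has positive Lebesgue measure --- for instance when $\set{S}$ is the open unit square with a fat Cantor-type set of positive measure removed, which is dense in the square. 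No proof can close this gap; a regularity hypothesis such as $\bigl|\overline{\set{S}}\setminus\set{S}\bigr|=0$ (in particular, $\set{S}$ closed, as holds for the square and circular regions \eqref{eq:square} and \eqref{eq:circular}) must be added, and your proof is complete precisely under that hypothesis. Note also that the defect propagates beyond the lemma: for such a pathological $\set{S}$ the converse bound \eqref{eq:proof_main} one obtains involves $\overline{\set{S}}$ while the achievability bound \eqref{eq:SH_LB_2} involves $\set{S}$, and indeed the true capacity loss is then governed by the closure (inputs dense in $\overline{\set{S}}$ with spacing much finer than $\sigma$ emulate any density supported on $\overline{\set{S}}$), so Theorem~\ref{theorem} itself requires the same regularity assumption.
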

\begin{proof}
Omitted.
\end{proof}

We return to the analysis of $I(X;Y)$ and apply \eqref{eq:proof_dual} together with the density \eqref{eq:proof_pdf} to express the upper bound as
\begin{IEEEeqnarray}{lCl}
\IEEEeqnarraymulticol{3}{l}{\int D\bigl(W(\cdot|x) \bigm\| R(\cdot)\bigr) \d Q(x)}\nonumber\\
\qquad & = & -h(Y|X) - \iint p(y|x) \log r(y) \d y \d Q(x) \label{eq:proof_1}
\end{IEEEeqnarray}
where $p(y|x)$ denotes the conditional probability density function of $Y$, conditioned on $X=x$.

Evaluation of the conditional differential entropy gives
\begin{equation}
\label{eq:h(Y|X)}
 h(Y|X) =   h(W) - \log\frac{1}{\sigma^2}
 \end{equation}
and some algebra applied to the second summand in \eqref{eq:proof_1} allows us to write it as
\begin{IEEEeqnarray}{lCl}
\IEEEeqnarraymulticol{3}{l}{-\iint p(y|x) \log r(y) \d y \d Q(x)}\nonumber\\
\quad\qquad & = &  \log\const{K}_{\eps,\sigma} + \lambda\, \E{|Y|^2 \I{Y\in\set{S}_{\eps}}} \nonumber\\
& & {} + \log\bigl(\pi^2\sigma^2\bigr) \Prob\bigl(Y\in\cset{S}_{\eps}\bigr) \nonumber\\
& & {} + \E{\log\biggl(\frac{|Y|}{\sigma}\biggr) \I{Y\in\cset{S}_{\eps}}} \nonumber\\ 
& & {}+ \E{\log\biggl(1+\frac{|Y|^2}{\sigma^2}\biggr) \I{Y\in\cset{S}_{\eps}}}. \label{eq:logR}
\end{IEEEeqnarray}
Combining \eqref{eq:h(Y|X)} and \eqref{eq:logR} with \eqref{eq:proof_1} and \eqref{eq:proof_dual} yields
\begin{IEEEeqnarray}{lCl}
\IEEEeqnarraymulticol{3}{l}{I(X;Y)}\nonumber\\
\quad & \leq & -h(W) + \log\frac{1}{\sigma^2} + \log\const{K}_{\eps,\sigma} + \lambda\, \E{|Y|^2 \I{Y\in\set{S}_{\eps}}}\nonumber\\
& & {} + \log\bigl(\pi^2\sigma^2\bigr) \Prob\bigl(Y\in\cset{S}_{\eps}\bigr) + \E{\log\biggl(\frac{|Y|}{\sigma}\biggr) \I{Y\in\cset{S}_{\eps}}} \nonumber\\ 
& & {}+ \E{\log\biggl(1+\frac{|Y|^2}{\sigma^2}\biggr) \I{Y\in\cset{S}_{\eps}}}. \label{eq:I(X;Y)UB}
\end{IEEEeqnarray}
We next show that, for $\eps>0$,
\begin{subequations}
\begin{IEEEeqnarray}{cCl}
\lim_{\sigma\downarrow 0} \sup_{X\in\set{S},\E{|X|^2}\leq \const{P}}\E{|Y|^2 \I{Y\in\set{S}_{\eps}}} & \leq & \const{P} \label{eq:proof_l1}\\
\lim_{\sigma\downarrow 0} \sup_{X\in\set{S},\E{|X|^2}\leq \const{P}}\Bigl|\log\bigl(\pi^2\sigma^2\bigr)\Prob\bigl(Y\in\cset{S}_{\eps}\bigr)\Bigr| & = & 0 \label{eq:proof_l2}\\
\lim_{\sigma\downarrow 0} \sup_{X\in\set{S},\E{|X|^2}\leq\const{P}} \Biggl|\E{\log\biggl(\frac{|Y|}{\sigma}\biggr) \I{Y\in\cset{S}_{\eps}}} \Biggr| & = & 0 \label{eq:proof_l3}\\
\lim_{\sigma\downarrow 0} \sup_{X\in\set{S},\E{|X|^2}\leq\const{P}} \E{\log\biggl(1+\frac{|Y|^2}{\sigma^2}\biggr) \I{Y\in\cset{S}_{\eps}}} & = & 0. \IEEEeqnarraynumspace\label{eq:proof_l4}
\end{IEEEeqnarray}
\end{subequations}

The first claim \eqref{eq:proof_l1} follows by upper-bounding
\begin{IEEEeqnarray}{lCl}
\IEEEeqnarraymulticol{3}{l}{\sup_{X\in\set{S},\E{|X|^2}\leq\const{P}} \E{|Y|^2 \I{Y\in\set{S}_{\eps}}}} \nonumber\\ 
\qquad & \leq & \sup_{X\in\set{S},\E{|X|^2}\leq\const{P}} \E{|Y|^2}\nonumber\\
& = & \sup_{X\in\set{S},\E{|X|^2}\leq\const{P}} \E{|X|^2} + \sigma^2 \E{|W|^2} \nonumber\\
& \leq & \const{P} + \sigma^2 \label{eq:proof_P+s}
\end{IEEEeqnarray}
where the second step follows because $X$ and $W$ are independent, and the third step follows because $\E{|X|^2}\leq\const{P}$ and $\E{|W|^2}=1$.

To prove \eqref{eq:proof_l2}, we first note that
\begin{equation}
\label{eq:35}
\Prob\bigl(Y\in\cset{S}_{\eps}\bigr) \leq \Prob\bigl(\sigma |W| > \eps\bigr).
\end{equation}
Indeed, if $|\sigma w|\leq\eps$, then we have $|y-x'|=|x+\sigma w-x'|\leq\eps$ for $x'=x\in\set{S}$, so $y\in\set{S}_{\eps}$. By Chebyshev's inequality \cite[Sec.~5.4]{gallager68}, this can be further upper-bounded by
\begin{IEEEeqnarray}{lCl}
\Prob\bigl(Y\in\cset{S}_{\eps}\bigr) & \leq & \frac{\sigma^2}{\eps^2}.\label{eq:proof_ch}
\end{IEEEeqnarray}
It then follows that, for $\sigma\leq\frac{1}{\pi}$,
\begin{equation}
0 \leq -\log\bigl(\pi^2\sigma^2\bigr)\Prob\bigl(Y\in\cset{S}_{\eps}\bigr) \leq -\log\bigl(\pi^2\sigma^2\bigr) \frac{\sigma^2}{\eps^2} \label{eq:36}
\end{equation}
where the right-most term vanishes as $\sigma$ tends to zero. This proves \eqref{eq:proof_l2}.

We next turn to \eqref{eq:proof_l3}. We first note that every $y\in\cset{S}_{\eps}$ must satisfy $|y|>\eps$, since otherwise $|y-x'|\leq\eps$ for $x'=0$, which by assumption is in $\set{S}$. Therefore,
\begin{IEEEeqnarray}{lCl}
\E{\log\biggl(\frac{|Y|}{\sigma}\biggr) \I{Y\in\cset{S}_{\eps}}} & \geq & \log\biggl(\frac{\eps}{\sigma}\biggr) \Prob\bigl(Y\in\cset{S}_{\eps}\bigr) \nonumber\\
& \geq & 0, \quad \textnormal{for $\sigma\leq\eps$.}\label{eq:proof_2}
\end{IEEEeqnarray}
To prove \eqref{eq:proof_l3}, it thus remains to show that
\begin{equation}
\label{eq:proof_3}
\lim_{\sigma\downarrow 0} \sup_{X\in\set{S},\E{|X|^2}} \E{\log\biggl(\frac{|Y|}{\sigma}\biggr) \I{Y\in\cset{S}_{\eps}}} \leq 0.
\end{equation}
By Jensen's inequality, we have
\begin{IEEEeqnarray}{lCl}
\IEEEeqnarraymulticol{3}{l}{\E{\log\biggl(\frac{|Y|}{\sigma}\biggr) \I{Y\in\cset{S}_{\eps}}}}\nonumber\\
\qquad & \leq & \Prob\bigl(Y\in\cset{S}_{\eps}\bigr) \log\Biggl(\frac{\E{|Y|\I{Y\in\cset{S}_{\eps}}}}{\sigma\Prob\bigl(Y\in\cset{S}_{\eps}\bigr)}\Biggr) \nonumber\\
& \leq & \frac{1}{2}\Prob\bigl(Y\in\cset{S}_{\eps}\bigr)\log\Biggl(\frac{\const{P}+\sigma^2}{\sigma^2\Prob\bigl(Y\in\cset{S}_{\eps}\bigr)}\Biggr) \label{eq:39}
\end{IEEEeqnarray}
where the last step follows from the Cauchy-Schwarz inequality
\begin{equation}
\E{|Y|\I{Y\in\cset{S}_{\eps}}} \leq \sqrt{\E{|Y|^2}\Prob\bigl(Y\in\cset{S}_{\eps}\bigr)}.
\end{equation}
Using \eqref{eq:proof_ch} together with the fact that $\xi\mapsto -\xi\log\xi$ is monotonically increasing for $\xi\leq e^{-1}$, we obtain for $\sigma\leq \eps\,e^{-1/2}$
\begin{IEEEeqnarray}{lCl}
\IEEEeqnarraymulticol{3}{l}{\E{\log\biggl(\frac{|Y|}{\sigma}\biggr) \I{Y\in\cset{S}_{\eps}}}}\nonumber\\
\qquad & \leq & \frac{1}{2} \frac{\sigma^2}{\eps^2} \log\biggl(1+\frac{\const{P}}{\sigma^2}\biggr) - \frac{\sigma^2}{2\eps^2} \log\frac{\sigma^2}{\eps^2} \label{eq:41}
\end{IEEEeqnarray}
from which \eqref{eq:proof_3}---and hence \eqref{eq:proof_l3}---follows by noting that the RHS of \eqref{eq:41} vanishes as $\sigma$ tends to zero.

To prove \eqref{eq:proof_l4}, we use Jensen's inequality and \eqref{eq:proof_P+s} to obtain
\begin{IEEEeqnarray}{lCl}
\IEEEeqnarraymulticol{3}{l}{\E{\log\biggl(1+\frac{|Y|^2}{\sigma^2}\biggr) \I{Y\in\cset{S}_{\eps}}}} \nonumber\\
\qquad & \leq & \Prob\bigl(Y\in\cset{S}_{\eps}\bigr) \log\Biggl(1+\frac{\E{|Y|^2\I{Y\in\cset{S}_{\eps}}}}{\sigma^2 \Prob\bigl(Y\in\cset{S}_{\eps}\bigr)}\Biggr) \nonumber\\
& \leq & \Prob\bigl(Y\in\cset{S}_{\eps}\bigr)\log\biggl(1+\frac{\const{P}}{\sigma^2} + \Prob\bigl(Y\in\cset{S}_{\eps}\bigr)\biggr) \nonumber\\
 & & {} - \Prob\bigl(Y\in\cset{S}_{\eps}\bigr)\log\Prob\bigl(Y\in\cset{S}_{\eps}\bigr). \label{eq:before_bla}
 \end{IEEEeqnarray}
Using \eqref{eq:proof_ch} together with the fact that $\xi\mapsto -\xi\log\xi$ is monotonically increasing for $\xi\leq e^{-1}$, we obtain for $\sigma\leq \eps\,e^{-1/2}$
 \begin{IEEEeqnarray}{lCl}
0 & \leq & \E{\log\biggl(1+\frac{|Y|^2}{\sigma^2}\biggr) \I{Y\in\cset{S}_{\eps}}}\nonumber\\ 
& \leq & \frac{\sigma^2}{\eps^2} \log\biggl(1+\frac{\const{P}}{\sigma^2} + \frac{\sigma^2}{\eps^2}\biggr) - \frac{\sigma^2}{\eps^2}\log\frac{\sigma^2}{\eps^2} \label{eq:bla}
\end{IEEEeqnarray}
from which \eqref{eq:proof_l4} follows by noting that the RHS of \eqref{eq:bla} vanishes as $\sigma$ tends to zero.

Combining \eqref{eq:proof_l1}--\eqref{eq:proof_l4} with \eqref{eq:I(X;Y)UB} yields
\begin{IEEEeqnarray}{lCl}
\IEEEeqnarraymulticol{3}{l}{\lim_{\sigma\downarrow 0} \Biggl\{\sup_{X\in\set{S},\E{|X|^2}\leq\const{P}} I(X;Y) - \log\frac{1}{\sigma^2}\Biggr\}}\nonumber\\
\qquad\qquad &\leq & {} -h(W) + \lim_{\sigma\downarrow 0} \log\const{K}_{\eps,\sigma} + \lambda\,\const{P} \nonumber\\
&= & {} -h(W) +  \log\biggl(\lim_{\sigma\downarrow 0}\const{K}_{\eps,\sigma}\biggr) + \lambda\,\const{P} \label{eq:proof_theorem_dobi}
\end{IEEEeqnarray}
where the last equation follows from the continuity of $x\mapsto \log(x)$ for $x>0$. Letting $\eps$ tend to zero, and using \eqref{eq:lemma3_2} in Lemma~\ref{lemma3}, we prove \eqref{eq:proof_main} and therefore the desired 
\begin{equation}
\const{L} = \log\const{P} + \log(\pi e) - \log\biggl(\int_{\set{S}} e^{-\lambda |y|^2}\d y\biggr) -\lambda\,\const{P}.
\end{equation}

\section{Nonasymptotic Capacity Loss}
A natural approach to prove Theorem~\ref{theorem} would be to generalize \eqref{eq:h_cont} to
\begin{equation}
\lim_{\sigma\downarrow 0} \sup_{X\in\set{S}, \E{|X|^2}\leq\const{P}} h(X+\sigma W) = \sup_{X\in\set{S}, \E{|X|^2}\leq\const{P}} h(X).
\end{equation}
While this approach may seem simpler, our approach has the advantage that it also allows for a lower bound on the \emph{nonasymptotic} capacity loss
\begin{equation}
\label{eq:loss_nonasymp}
\const{L}(\sigma) \triangleq C_{\Complex}(\const{P},\sigma)- C_{\set{S}}(\const{P},\sigma), \quad \sigma>0.
\end{equation}
Indeed, combining \eqref{eq:before_bla}, \eqref{eq:39}, and \eqref{eq:proof_P+s} with \eqref{eq:I(X;Y)UB} yields
\begin{IEEEeqnarray}{lCl}
I(X;Y) & \leq & -h(W) + \log\frac{1}{\sigma^2} + \log \const{K}_{\eps,\sigma} + \lambda \bigl(\const{P}+\sigma^2\bigr) \nonumber\\
& & {} + \log^+\bigl(\pi^2\sigma^2\bigr)\Prob\bigl(Y\in\cset{S}_{\eps}\bigr) \nonumber\\
& & {} + \frac{1}{2}\Prob\bigl(Y\in\cset{S}_{\eps}\bigr)\log\biggl(1+\frac{\const{P}}{\sigma^2}\biggr) \nonumber\\
& & {} +  \Prob\bigl(Y\in\cset{S}_{\eps}\bigr)\log\biggl(1+\frac{\const{P}}{\sigma^2} + \Prob\bigl(Y\in\cset{S}_{\eps}\bigr)\biggr) \nonumber\\
 & & {} - \frac{3}{2} \Prob\bigl(Y\in\cset{S}_{\eps}\bigr)\log\Prob\bigl(Y\in\cset{S}_{\eps}\bigr)
\end{IEEEeqnarray}
where $\log^+(\xi)\triangleq\max\{0,\log\xi\}$, $\xi>0$. By upper-bounding
\begin{equation}
\const{K}_{\eps,\sigma} \leq \int_{\set{S}_{\eps}} e^{-\lambda |y|^2} \d y + 1 - \frac{2}{\pi}\tan^{-1}\biggl(\frac{\eps}{\sigma}\biggr)
\end{equation}
(where $\tan^{-1}(\cdot)$ denotes the arctangent function), and by using \eqref{eq:35} together with the fact that $\xi\mapsto -\xi\log\xi$ is monotonically increasing for $\xi\leq e^{-1}$ and that $-\xi\log\xi \leq 1/e$ for $0<\xi<1$, we obtain, upon minimizing over $\eps>0$,
\begin{IEEEeqnarray}{lCl}
\IEEEeqnarraymulticol{3}{l}{C_{\set{S}}(\const{P},\sigma)} \nonumber\\
\quad & \leq & \inf_{\eps>0} \Biggl\{-h(W)+\log\frac{1}{\sigma^2} + \lambda\bigl(\const{P}+\sigma^2\bigr) \nonumber\\
& & {} +  \log\Biggl(\int_{\set{S}_{\eps}} e^{-\lambda |y|^2} \d y + 1 - \frac{2}{\pi}\tan^{-1}\biggl(\frac{\eps}{\sigma}\biggr) \Biggr) \nonumber\\
& & {} + \log^+\bigl(\pi^2\sigma^2\bigr)\Prob\bigl(\sigma |W| > \eps\bigr)\nonumber\\
& & {} + \frac{1}{2}\Prob\bigl(\sigma |W| > \eps\bigr)\log\biggl(1+\frac{\const{P}}{\sigma^2}\biggr) \nonumber\\
& & {} +  \Prob\bigl(\sigma |W| > \eps\bigr)\log\biggl(1+\frac{\const{P}}{\sigma^2} + \Prob\bigl(\sigma |W| > \eps\bigr)\biggr) \nonumber\\
& & {} - \frac{3}{2} \Prob\bigl(\sigma |W| > \eps\bigr)\log\Bigl(\Prob\bigl(\sigma |W| > \eps\bigr)\Bigr) \nonumber\\
& & \qquad\qquad\qquad\qquad\,\, {} \times \I{\Prob\bigl(\sigma |W| > \eps\bigr)\leq 1/e}\nonumber\\
& & \!\quad\qquad\qquad\qquad{} + \frac{3}{2e} \I{\Prob\bigl(\sigma |W| > \eps\bigr) > 1/e} \Biggr\}. \IEEEeqnarraynumspace
\end{IEEEeqnarray}
This together with \eqref{eq:loss_nonasymp} yields a lower bound on $\const{L}(\sigma)$.

\begin{figure}[t]
  \centering
  \psfrag{ylabel}[cb][cb]{$\const{L}(\sigma)$}
  \psfrag{xlabel}[ct][ct]{$1/\sigma^2$ [dB]}
  \psfrag{asymptotic}[c][c]{\footnotesize asymptotic capacity loss $\const{L}$}
  \psfrag{nonasymptotic}[lc][lc]{\footnotesize lower bound on $\const{L}(\sigma)$}
  \psfrag{1024}[rc][rc]{\tiny $2^{10}$-ary QAM}
  \psfrag{65536}[rc][rc]{\tiny $2^{16}$-ary QAM}
  \psfrag{4194304}[rc][rc]{\tiny $2^{22}$-ary QAM}
    \epsfig{file=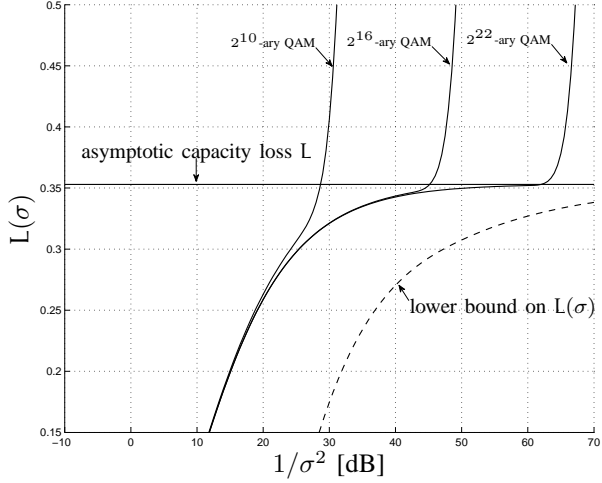, width=0.5\textwidth}
  \caption{The capacity loss $\const{L}(\sigma)$ for circularly-symmetric Gaussian noise and square constellations with $\const{P}=\const{P}_{\set{U}}$.}
  \label{fig}
\end{figure}

Figure~\ref{fig} shows the lower bound on $\const{L}(\sigma)$ for circularly-symmetric Gaussian noise and a  square signal constellation \eqref{eq:square} with $\const{P}=\const{P}_{\set{U}}$. It further shows the information-rate losses of $2^m$-ary quadrature amplitude modulation (QAM) for $m=10,16$, and $22$, which were numerically obtained using Gauss-Hermite quadratures \cite{abramowitzstegun72}, as described for example in \cite[Sec.~III]{alvaradobrannstromagrell11}. Since for a fixed $m$ the information rate corresponding to $2^m$-ary QAM is bounded by $m$ bits, the rate loss of $2^m$-ary QAM tends to infinity as $\sigma$ tends to zero. We observe that the lower bound on $\const{L}(\sigma)$ converges to \mbox{$\const{L}=\log(\pi e/6)\approx 0.353$} as $\sigma$ tends to zero, but is rather loose for finite $\sigma$. However, in the proof of Theorem~\ref{theorem} we chose the density \eqref{eq:proof_pdf} to decay sufficiently slowly, so as to ensure that the lower bound on $\const{L}$ holds for every unit-variance noise of finite differential entropy. For Gaussian noise, a density can be chosen that decays much faster, giving rise to a tighter bound.

\section*{Acknowledgment}
The authors would like to thank Alex Alvarado for helpful discussions and for providing the QAM curves in Figure~\ref{fig}.




\end{document}